\providecommand{\U}[1]{\protect\rule{.1in}{.1in}}
\newtheorem{theorem}{Theorem}
\newtheorem{acknowledgement}[theorem]{Acknowledgement}
\newtheorem{corollary}[theorem]{Corollary}
\newenvironment{proof}[1][Proof]{\noindent\textbf{#1.} }{\ \rule{0.5em}{0.5em}}
\begin{document}

\title{Polar Duality and the Donoho--Stark Uncertainty Principle}
\author{Maurice de Gosson\\Austrian Academy of Sciences\\Acoustics Research Institute\\1010, Vienna, AUSTRIA\\and\\University of Vienna\\Faculty of Mathematics (NuHAG)\\1090 Vienna, AUSTRIA}
\maketitle

\begin{abstract}
Polar duality is a fundamental geometric concept that can be interpreted as a
form of Fourier transform between convex sets. Meanwhile, the Donoho--Stark
uncertainty principle in harmonic analysis provides a framework for comparing
the relative concentrations of a function and its Fourier transform. Combining
the Blaschke--Santal\'{o} inequality from convex geometry with the
Donoho--Stark principle, we establish estimates for the trade-off \ of
concentration between a square integrable function in a symmetric convex body
and that of its Fourier transform in the polar dual of that body. In passing,
we use the Donoho--Stark uncertainty principle to establish a new
concentration result for the Wigner function.

\end{abstract}

\textbf{Keywords}: quantum indeterminacy; polar duality; Blaschke--Santal\'{o}
inequality; Donoho--Stark uncertainty principle; quantum blobs; symplectic group

\section{Introduction}

The term \textquotedblleft quantum indeterminacy\textquotedblright\ refers to
a fundamental concept in quantum mechanics that highlights the intrinsic
uncertainty and unpredictably in the behavior of quantum systems. There are
several different ways to express quantum indeterminacy; the simplest (from
which actually most others are derived) is that a function and its Fourier
transform cannot simultaneously sharply located. On a more sophisticated
level, it is expressed by the Heisenberg uncertainty relations (or their
refinement, the Robertson--Schr\"{o}dinger inequalities). The drawback of the
latter is that they privilege variances (and covariances) for measuring
deviations;, this drawback has been discussed and criticized by Hilgevoord and
Uffink \cite{hi02,hiuf} who point out that their use for measuring the
deviations is only optimal for states that are Gaussian or nearly Gaussian
states. In previous work \cite{gopol1,gopol2,gopol3,gopol4,gopolar} we have
proposed a version of quantum indeterminacy using the geometric concept of
\emph{polar duality}. Polar duality is a concept from convex geometry, which
can be viewed (loosely) as a kind of Fourier transform between sets: if $X$ is
a convex body in $\mathbb{R}_{x}^{n}$ then its polar dual \textit{\ }%
$X^{\hbar}$ is \textit{the set of all }$p\in\mathbb{R}_{p}^{n}$ \textit{ }such
that $p\cdot x\leq\hbar$ for all $x\in X$.).

\section{The Donoho--Stark Uncertainty Principle}

\subsection{Statement}

Let $\psi\in L^{2}(\mathbb{R}^{n})$ and $\widehat{\psi}$ its unitary $\hbar
$-Fourier transform:
\[
\widehat{\psi}(p)=\left(  \tfrac{1}{2\pi\hbar}\right)  ^{n/2}\int%
_{\mathbb{R}^{n}}e^{-ipx/\hbar}\psi(x)dx.
\]
Let $\varepsilon\in\lbrack0,1]$. We say that $\psi$ is $\varepsilon
$-concentrated in a measurable set $X\subset\mathbb{R}^{n}$ if we have%
\[
\left(  \int_{\overline{X}}|\psi(x)|^{2}dx\right)  ^{1/2}\leq\varepsilon
||\psi||_{L^{2}}%
\]
where $\overline{X}=\mathbb{R}^{n}\diagdown X$ is the complement of $X$.
Similarly, $\widehat{\psi}$ is $\eta$-concentrated in $P$ if
\[
\left(  \int_{\overline{P}}|\widehat{\psi}(p)|^{2}dp\right)  ^{1/2}\leq
\eta||\widehat{\psi}||_{L^{2}}.
\]
These both inequalities are trivially equivalent to
\[
\int_{X}|\psi(x)|^{2}dx\geq(1-\varepsilon^{2})||\psi||_{L^{2}}^{2}\text{
},\text{ }\int_{P}|\widehat{\psi}(p)|^{2}dp\geq(1-\eta^{2})||\widehat{\psi
}||_{L^{2}}^{2}.
\]

In \cite{DS} Donoho and Stark proved the following result about the
concentration of a square integrable function and its Fourier transform:

\begin{theorem}
[Donoh--Stark]If $X\subset\mathbb{R}^{n}$ and $P\subset(\mathbb{R}^{n})^{\ast
}$ are measurable sets such that
\[
\int_{X}|\psi(x)|^{2}dx\geq(1-\varepsilon^{2})||\psi||_{L^{2}}^{2}\text{
},\text{ }\int_{P}|\widehat{\psi}(p)|^{2}dp\geq(1-\eta^{2})||\widehat{\psi
}||_{L^{2}}^{2}%
\]
where $\varepsilon$, $\eta\geq0$, then
\[
(\operatorname*{Vol}X)(\operatorname*{Vol}P)\geq(2\pi\hbar)^{n}(1-\varepsilon
-\eta)^{2}.
\]

\end{theorem}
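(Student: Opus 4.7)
The plan is to bound from both sides the composition $B_P T_X$ of the time-limiting operator $T_X\psi = \chi_X\psi$ with the band-limiting operator $B_P = \mathcal{F}^{-1}\chi_P\mathcal{F}$, both of which are orthogonal projections in $L^{2}(\mathbb{R}^{n})$. After normalizing $\|\psi\|_{L^2}=1$ (hence $\|\widehat{\psi}\|_{L^2}=1$ by Plancherel), the whole argument rests on sandwiching $\|B_P T_X \psi\|_{L^2}$: from below using the concentration hypotheses, and from above using the volumes of $X$ and $P$.

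For the lower bound, the triangle inequality gives
\[
\|\psi - B_P T_X \psi\|_{L^2} \leq \|\psi - B_P\psi\|_{L^2} + \|B_P(\psi - T_X\psi)\|_{L^2}.
\]
The first summand equals $\|\chi_{\overline{P}}\widehat{\psi}\|_{L^2}\leq\eta$ by the unitarity of $\mathcal{F}$, while the second is at most $\|\psi - T_X\psi\|_{L^2}\leq\varepsilon$ because $\|B_P\|\leq 1$. The reverse triangle inequality then yields $\|B_P T_X \psi\|_{L^2}\geq 1-\varepsilon-\eta$; we may assume this quantity is nonnegative, else the theorem is vacuous.

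For the upper bound, $B_P T_X$ is an integral operator with kernel
\[
K(x,y)=\frac{\chi_X(y)}{(2\pi\hbar)^n}\int_P e^{ip\cdot(x-y)/\hbar}\,dp,
\]
and I would compute its Hilbert--Schmidt norm. Fixing $y\in X$, the substitution $u=x-y$ together with Plancherel applied to $\chi_P$ (through the unitary $\hbar$-Fourier transform) gives $\int|K(x,y)|^2\,dx = (2\pi\hbar)^{-n}\operatorname{Vol}(P)$, and integrating over $y$ yields $\|B_P T_X\|_{HS}^2 = (2\pi\hbar)^{-n}\operatorname{Vol}(X)\operatorname{Vol}(P)$. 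Since the operator norm is dominated by the Hilbert--Schmidt norm,
\[
\|B_P T_X\psi\|_{L^2}\leq\bigl((2\pi\hbar)^{-n}\operatorname{Vol}(X)\operatorname{Vol}(P)\bigr)^{1/2}.
\]

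Squaring the resulting inequality $1-\varepsilon-\eta\leq\bigl((2\pi\hbar)^{-n}\operatorname{Vol}(X)\operatorname{Vol}(P)\bigr)^{1/2}$ produces the asserted estimate. The only delicate point is the Hilbert--Schmidt computation: one must recognize the inner integral as (a multiple of) the inverse $\hbar$-Fourier transform of $\chi_P$ and track carefully the powers of $2\pi\hbar$ coming from the chosen normalization. Everything else---the triangle-inequality setup and bounding operator norm by Hilbert--Schmidt norm---is entirely standard.
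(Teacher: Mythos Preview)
Your argument is correct and is precisely the standard proof the paper defers to (it does not give its own proof but points to Gr\"{o}chenig's treatise): bound the operator $B_P T_X$ below via the triangle inequality using the two concentration hypotheses, and above by its Hilbert--Schmidt norm computed from the explicit kernel. Your tracking of the $(2\pi\hbar)$ factors in the $\hbar$-Fourier normalization is accurate, so there is nothing to add.
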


A concise and limpid proof is given in Gr\"{o}chenig's treatise; \cite{gro};
in \cite{Boggiatto} Boggiatto \textit{et al}. somewhat extend and refine this result.

\subsection{Application to the Wigner transform}

For $\psi\in L^{2}(\mathbb{R}^{n})$ the Wigner function of $\psi$ is is
defined by the absolutely convergent integral%
\[
W\psi(z)=\left(  \tfrac{1}{2\pi\hbar}\right)  ^{n}\int_{\mathbb{R}^{n}%
}e^{-\frac{i}{\hbar}p\cdot y}\psi(x+\tfrac{1}{2}y)\overline{\psi(x-\tfrac
{1}{2}y)}dy;
\]
here $z=(x,p)\in\mathbb{R}^{2n}$ is the phase space variable.

\begin{corollary}
Assume that $\psi\in L^{2}(\mathbb{R}^{n})$ is even: $\psi(-x)=\psi(x)$. (i)
Assume that $W\psi$ is $\varepsilon$-concentrated in the measurable set
$X\subset\mathbb{R}^{2n}$:
\[
\int_{X}|W\psi(z)|^{2}dz\geq(1-\varepsilon^{2})||W\psi||_{L^{2}(\mathbb{R}%
^{2n})}^{2}.
\]
Then
\[
\operatorname*{Vol}(X)\geq(\pi\hbar)^{n}|1-2\varepsilon|.
\]
(ii) In particular, if $X=B^{2n}(\sqrt{\hbar})$ then
\[
\frac{1}{2}-\delta(n)\leq\varepsilon\leq\frac{1}{2}+\delta(n)
\]
where $\delta(n)=1/n!$
\end{corollary}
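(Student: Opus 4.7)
The plan is to apply Donoho--Stark to $W\psi$ viewed as a function on $\mathbb{R}^{2n}$, after first establishing that for even $\psi$ the Wigner transform is, up to a factor $2^{-n}$ and a linear change of variables in phase space, its own $2n$-dimensional $\hbar$-Fourier transform. This Fourier self-similarity will let me express the $\varepsilon$-concentration of $\widehat{W\psi}$ in terms of that of $W\psi$, collapsing the two-sided Donoho--Stark inequality into a single bound on $\operatorname{Vol}(X)$.

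The technical heart is the identity $\widehat{W\psi}(\zeta,\eta)=2^{-n}W\psi(\eta/2,-\zeta/2)$ for even $\psi$. To prove it I would start from the definition of the $2n$-dimensional $\hbar$-Fourier transform, insert the integral defining $W\psi$, and perform the $p$-integration first; this produces a delta function forcing $y=-\eta$ and leaves $(2\pi\hbar)^{-n}\int e^{-i\zeta\cdot x/\hbar}\psi(x-\eta/2)\overline{\psi(x+\eta/2)}\,dx$. Using $\psi(-\cdot)=\psi(\cdot)$ together with the substitution $x\mapsto -x$, I rearrange the integrand into the Wigner pattern $\psi(x+\eta/2)\overline{\psi(x-\eta/2)}$, and a final rescaling $x\mapsto y/2$ produces the claimed identity (with $2^{-n}$ coming from the Jacobian of the rescaling).

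Once the identity is in hand, Plancherel gives $\|\widehat{W\psi}\|_{L^2(\mathbb{R}^{2n})}=\|W\psi\|_{L^2(\mathbb{R}^{2n})}$, and the change of variables $(x,p)=(\eta/2,-\zeta/2)$ (Jacobian $4^{-n}$) shows that if $W\psi$ is $\varepsilon$-concentrated in $X$, then $\widehat{W\psi}$ is $\varepsilon$-concentrated in the set $Y:=\{(\zeta,\eta):(\eta/2,-\zeta/2)\in X\}$, with $\operatorname{Vol}(Y)=4^n\operatorname{Vol}(X)$. Applying the Donoho--Stark theorem in dimension $2n$ with both concentration parameters equal to $\varepsilon$ gives $4^n\operatorname{Vol}(X)^2=\operatorname{Vol}(X)\operatorname{Vol}(Y)\geq(2\pi\hbar)^{2n}(1-2\varepsilon)^2$, and taking positive square roots yields $\operatorname{Vol}(X)\geq(\pi\hbar)^n|1-2\varepsilon|$, which is part (i). For part (ii), inserting $\operatorname{Vol}(B^{2n}(\sqrt\hbar))=\pi^n\hbar^n/n!$ into this bound gives $|1-2\varepsilon|\leq 1/n!$, i.e.\ $|\varepsilon-\tfrac12|\leq 1/(2n!)\leq 1/n!=\delta(n)$.

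The main obstacle is the Fourier self-similarity identity itself: one must track signs in the exponential, keep the factor $2^{-n}$ correctly, and invoke evenness at exactly the right moment to convert the non-Wigner integrand $\psi(x-\eta/2)\overline{\psi(x+\eta/2)}$ into the Wigner pattern. Once that identity is established, the remainder is a direct application of Donoho--Stark together with Plancherel, a linear change of variables, and the formula for the volume of a Euclidean ball.
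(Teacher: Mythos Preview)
Your proposal is correct and follows essentially the same route as the paper: establish that for even $\psi$ the symplectic Fourier transform of $W\psi$ equals $2^{-n}W\psi$ composed with the linear map $z\mapsto -\tfrac{1}{2}Jz$, then apply Donoho--Stark in dimension $2n$ with both concentration parameters equal to $\varepsilon$, and finally specialize to the ball. The only cosmetic difference is that the paper obtains the self-similarity identity by passing through the ambiguity function $A\psi$ and the known relation $A\psi(z)=FW\psi(Jz)$, whereas you compute $\widehat{W\psi}$ directly; the resulting identity and the remainder of the argument are identical.
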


\begin{proof}
The ambiguity function $A\psi$ is defined by
\[
A\psi(z)=\left(  \tfrac{1}{2\pi\hbar}\right)  ^{n}\int_{\mathbb{R}^{n}%
}e^{-\frac{i}{\hbar}p\cdot y}\psi(y+\tfrac{1}{2}x)\overline{\psi(y-\tfrac
{1}{2}x)}dy
\]
and we have $A\psi(z)=FW\psi(Jz)$. A simple calculation shows that when $\psi$
is even,
\[
FW\psi(z)=A\psi(-Jz)=2^{-n}W\psi(-\frac{1}{2}Jz).
\]
and hence, for a measurable set $X\subset\mathbb{R}^{2n}$, seeing $U=-\frac
{1}{2}Jz$
\begin{align*}
\int_{2JX}|FW\psi(z)|^{2}dz  &  =2^{-2n}\int_{2JX}|W\psi(-\frac{1}{2}%
Jz)|^{2}dz\\
&  =.\int_{X}|W\psi(u)|^{2}du\geq1-\varepsilon^{2}.
\end{align*}
It follows from the Donoho--Stark uncertainty principle tat%
\[
\operatorname*{Vol}X)(\operatorname*{Vol}(2X\geq(2\pi\hbar)^{2n}%
(1-2\varepsilon)^{2}%
\]
that is
\[
\operatorname*{Vol}(X)\geq(\pi\hbar)^{n}|1-2\varepsilon)|.
\]
Choosing $X=B^{2n}(\sqrt{\hbar})$ we have $\operatorname*{Vol}(X)=(\pi
\hbar)^{n}/n!$ and hence $\frac{1}{n!}\geq(|1-2\varepsilon)|$ that is
\[
\frac{1}{2}-\frac{1}{2n!}\leq\varepsilon\leq\frac{1}{2}+.\frac{1}{2n!}%
\]
which is the same thing as%
\[
\delta(n)\leq\varepsilon\leq\frac{1}{2}+\delta(n)
\]
where $\delta(n)=1/n!$.
\end{proof}

\section{Polar Duality and Applications}

\subsection{Definition and main properties}

Let $X\subset\mathbb{R}_{x}^{n}$ be symmetric convex boy; that is $X$ is
compact and convex with non empty interior, and $X=-X$ (hence $0\in X$).

The polar dual $X^{\hbar}$ of $X$ with respect to its center $0$ is the set of
all $p=(p_{1},...,p_{n})$ in momentum space $\mathbb{R}_{p}^{n}=(\mathbb{R}%
_{x}^{n})^{\ast}$ such that%
\[
px=p_{1}x_{1}+\cdot\cdot\cdot+p_{n}x_{n}\leq\hbar.
\]

We have $(X^{\hbar})^{\hbar}=X$,and $X\subset Y$ implies $Y\subset X^{\hbar}$.
If $A\in GL(n,\mathbb{R})$, then
\[
(AX)^{\hbar}=(A^{T})^{-1}X^{\hbar}%
\]
hence, if $A=A^{T}>0$%
\[
\{x:Ax\cdot x\leq\hbar\}^{\hbar}=\{p:A^{-1}p\cdot p\leq\hbar\}.
\]
In particular
\[
B^{n}(\sqrt{\hbar})^{\hbar}=B^{n}(\sqrt{\hbar})
\]
($B^{n}(\sqrt{\hbar})$ is the only fixed point for polar duality relation
$X\longrightarrow X^{\hbar}$).

See Vershynin \cite{Vershynin} for a detailed study of the notion of polar
duality in the context of geometric analysis. Also see \cite{gopol2}.

So far we have assumed that the convex body $X$ is symmetric and hence
centered at zero. The general case is more difficult to handle and needs the
use of the Santal\'{o} point \cite{Santalo} as center. Its definition goes as
follows: or an arbitrary point $x_{0}$ in the interior of $X$ we define the
\textit{polar body of} $X$ \textit{with respect to} $x_{0}$ as being the set%
\begin{equation}
X^{\hbar}(x_{0})=(X-x_{0})^{\hbar}.\label{Santalopt2}%
\end{equation}
Santal\'{o} proved in \cite{Santalo} the following remarkable result: there
exists a \emph{unique} interior point $x_{\mathrm{S}}$ of $X$ (the
\textquotedblleft Santal\'{o} point of $X$\textquotedblright) such that the
polar dual $X^{\hbar}(x_{\mathrm{S}})=(X-x_{\mathrm{S}})^{\hbar}$ has centroid
$\overline{p}=0$ and its volume $\operatorname*{Vol}\nolimits_{n}(X^{\hbar
}(x_{\mathrm{S}}))$ is minimal for all possible interior points $x_{0}$. See
our discussion in \ \cite{gopol2} for details.

Let $X:Ax\cdot x\leq\hbar$ and $X^{\hbar}:(A^{-1})^{T}x\cdot x\leq\hbar$ be
dual ellipsoids. Then the John ellipsoid of the convex set $X\times X^{\hbar}$
is a "quantum blob" i.e. the image of the phase space ball $B^{2}(\sqrt{\hbar
})$ by a linear symplectic transformation $S\in\operatorname*{Sp}(n).$

Moreover, the \emph{Gromov width} of $X\times X^{\hbar}$ is
\[
w(X\times X^{\hbar})=4\hbar
\]
(in the case $n=1$ the sets $X$ and $X^{\hbar}$ are intervals and the area of
the rectangle $\times$ is $4\hbar$.

The definition of polar duality extends to the case where $\mathbb{R}_{x}^{n}$
and $\mathbb{R}_{p}^{n}=(\mathbb{R}_{x}^{n})^{\ast}$ are replaced with a pair
$(\ell.\ell^{\prime})$ of transverse Lagrangian planes (i.e. $\dim\ell
=\dim\ell^{\prime}=n$ and the symplectic form $\sigma$ vanishes on $\ell$
(resp. $\ell^{\prime}$): For a symmetric convex body $X_{\ell}\subset\ell$ he
polar dual $(X_{\ell})_{\ell^{\prime}}^{\hbar}$ is defined by
\[
(X_{\ell})_{\ell^{\prime}}^{\hbar}=\left\{  z^{\prime}\in\ell^{\prime}%
:\sigma(z.z^{\prime})\leq\hbar\right\}  .
\]
Again the the John ellipsoid of product $X_{\ell}\times(X_{\ell}%
)_{\ell^{\prime}}^{\hbar}$ contains a quantum blob $S(B^{2}(\sqrt{\hbar}))$
when $X_{\ell}$ is an ellipsoid. For details see e.g \cite{gopol4}.

\subsection{Relation with the uncertainty principle}

Recall \cite{go09,blobs,golu09} that a \textquotedblleft quantum
blob\textquotedblright\ is the image of the phase space ball $B^{2n}%
(\sqrt{\hbar})$ by a linear symplectic transformation $S\in\operatorname*{Sp}%
(n)$.This is easily seen as follows: the ellipsoid $X:Ax\cdot x\leq\hbar$ is
the image of the ball $B_{X}^{n}(\sqrt{\hbar})$ by the linear mapping
\ $A^{-1/2}$ while the ellipsoid $X^{\hbar}:A^{-1}p\cdot p\leq\hbar$ is that
of $B_{P}^{n}(\sqrt{\hbar})$ by $A^{1/2}$. It follows that the cell $X\times$
$X^{\hbar}$ is the image of the product $B_{X}^{n}(\sqrt{\hbar})\times
B_{P}^{n}(\sqrt{\hbar})$ by the symplectic mapping
\[
S=%
\begin{pmatrix}
A^{-1/2} & 0\\
0 & A^{1/2}%
\end{pmatrix}
.
\]
Now the unique largest ellipsoid (The John ellipsoid, see \cite{Ball})
contained in the convex set $B_{X}^{n}(\sqrt{\hbar})\times B_{P}^{n}%
(\sqrt{\hbar})$ is $B^{2n}(\sqrt{\hbar})$ hence the John ellipsoid of
$X\times$ $X^{\hbar}$ is $S(B^{2n}(\sqrt{\hbar})),$ which is a quantum blob.

\ Quantum blobs represents the smallest unit of phase space compatible with
the uncertainty principle. They is defined in the context of the
Robertson-Schr\"{o}dinger uncertainty relation and are characterized by its
invariance under symplectic transformations. To every quantum blob one
associates in a canonical way a generalized coherent state. For instance , to
the ball $B^{2n}(\sqrt{\hbar})$ is associated the $n$-dimensional coherent
state $\phi_{0}(x)=(\pi\hbar)^{-n/4}e^{-x^{2}/2\hslash}$. We have detailed
these properties in our paper \cite{golu09} with Luef.

Another illustration of is given by Hardy's uncertainty principle. It says
that if $A,B\in GL(n,\mathbb{R})$ are symmetric and positive definite and if
$\psi,\widehat{\psi}(\in L^{2}(\mathbb{R}^{n})$ satisfies the estimates
\[
|\psi(x)|\leq ke^{-Ax\cdot x/2\hbar}\text{ \ and \ }|\widehat{\psi}(p)|\leq
ke^{-Bp\cdot p/2\hbar}%
\]
for some $k>0$, then the ellipsoids $X=\{x:Ax\cdot x\leq\hbar\}$ and
$P=\{p:Bp\cdot p\leq\hbar\}$ satisfy $X^{\hbar}\subset P$ with equality
$P=X^{\hbar}$ if and only if $\psi(x)=Ce^{-Ax\cdot x/2\hbar}$ for some $C>0$.
The conditions on $|\psi(x)|$ and $|\widehat{\psi}(p)|$ are equivalent to
saying that the eigenvalues of $AB$ are $\leq1$; this is in turn equivalent to
$X^{\hbar}\subset P$; see \cite{ACHA} for a detailed proof.

\subsection{The Mahler volume and the Blaschke--Santal\'{o} inequality}

A remarkable property of polar duality, the Blaschke--Santal\'{o} inequality:
says that if $X$ is a symmetric convex body; then the Mahler volume $v(X)$,
defined by%
\[
v(X)=(\operatorname*{Vol}X)(\operatorname*{Vol}X^{\hbar})
\]
satisfies the inequality%
\begin{equation}
v(X)\leq(\operatorname*{Vol}\nolimits_{n}(B^{n}(\sqrt{\hbar}))^{2}
\label{santalo1}%
\end{equation}
that is,
\begin{equation}
v(X)=(\operatorname*{Vol}X)(\operatorname*{Vol}X^{\hbar})\leq\frac{(\pi
\hbar)^{n}}{\Gamma(\frac{n}{2}+1)^{2}} \label{BS0}%
\end{equation}
where $\operatorname*{Vol}\nolimits_{n}$ is the standard Lebesgue measure on
$\mathbb{R}^{n}$, and equality is attained if and only if $X\subset
\mathbb{R}_{x}^{n}$ is an ellipsoid centered at the origin. The Mahler has
conjectured that lower bound is%
\begin{equation}
\upsilon(X)\geq\frac{(4\hbar)^{n}}{n!} \label{volvo3}%
\end{equation}
but this claim has so far resisted to all proof attempts The best know result
is the following, due to Kuperberg \cite{Kuper}, who has shown that
\begin{equation}
\upsilon(X)\geq\frac{(\pi\hbar)^{n}}{4^{n}n!}. \label{kuper}%
\end{equation}
Summarizing, we have the bounds%
\begin{equation}
\frac{(\pi\hbar)^{n}}{4^{n}n!}\leq\upsilon(X)\leq\frac{(\pi\hbar)^{n}}%
{\Gamma(\frac{n}{2}+1)^{2}} \label{bounds}%
\end{equation}
\ (see \cite{gopolar} for a discussion of other partial results).

The Mahler volume has the intuitive interpretation as being a measure of
\textquotedblleft roundness\textquotedblright: its largest value is taken by
balls (or ellipsoids), and its smallest value (the bound (\ref{volvo3})) is
indeed attained by any $n$-parallelepiped%
\begin{equation}
X=[-\sqrt{2\sigma_{x_{1}x_{1}}},\sqrt{2\sigma_{x_{1}x_{1}}}]\times\cdot
\cdot\cdot\times\lbrack-\sqrt{2\sigma_{x_{n}x_{n}}},\sqrt{2\sigma_{x_{n}x_{n}%
}}]. \label{interval}%
\end{equation}
This is related to the covariances of the tensor product $\psi=\phi_{1}%
\otimes\cdot\cdot\cdot\otimes\phi_{n}$ of standard one-dimensional Gaussians
$\phi_{j}(x)=(\pi\hbar)^{-1/4}e^{-x_{j}^{2}/2\hbar}$; the function $\psi$ is a
minimal uncertainty quantum state in the sense that it reduces the Heisenberg
inequalities to equalities. We suggest that such quantum considerations might
lead to proof of Mahler's conjecture.

\subsection{A concentration result}

Let us prove:

\begin{theorem}
Let $X$ be a symmetric body in $\mathbb{R}^{n}$ and $\psi\in L^{2}%
(\mathbb{R}^{n})$. The concentration inequalities%
\begin{equation}
\left(  \int_{X}|\psi(x)|^{2}dx\right)  ^{1/2}\leq\varepsilon||\psi||_{L^{2}%
}\text{ }\ \text{, \ }\int_{X^{\hbar}}|\widehat{\psi}(p)|^{2}dp\leq
\eta||\widehat{\psi}||_{L^{2}}%
\end{equation}
can hold if if and only if
\begin{equation}
1-\delta(n)\leq\varepsilon+\eta\leq1+\delta(n)
\end{equation}
where
\[
\delta(n)=\frac{1}{2^{n/2}\Gamma(n/2+1)}.\overset{n\rightarrow\infty
}{\rightarrow}0.
\]

\end{theorem}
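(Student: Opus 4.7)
The plan is to combine the Donoho--Stark uncertainty principle (Theorem~1) with the Blaschke--Santal\'{o} inequality (\ref{BS0}) in the most direct way possible, exploiting the fact that the ``Fourier-side'' set is now chosen to be the polar dual $X^{\hbar}$ of $X$ itself. Under this identification the product $(\operatorname{Vol}X)(\operatorname{Vol}X^{\hbar})$ appearing in Donoho--Stark is literally the Mahler volume $v(X)$, which is controlled from above by Blaschke--Santal\'{o}; these two bounds will squeeze $1-\varepsilon-\eta$ into an interval symmetric about $0$.

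Before doing anything I would settle the intended reading of the hypotheses, since as typeset the second inequality is missing a square-root exponent and the integrals sit over $X$ and $X^{\hbar}$ rather than their complements. Comparing with the Donoho--Stark setup in Section~2, the intended conditions should be $\int_X |\psi|^2 \geq (1-\varepsilon^2)\|\psi\|_{L^2}^2$ and $\int_{X^{\hbar}} |\widehat\psi|^2 \geq (1-\eta^2)\|\widehat\psi\|_{L^2}^2$, that is, $\psi$ is $\varepsilon$-concentrated in $X$ and $\widehat\psi$ is $\eta$-concentrated in $X^{\hbar}$.

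For the necessity direction I apply Theorem~1 with $P := X^{\hbar}$, producing
\[
(\operatorname{Vol}X)(\operatorname{Vol}X^{\hbar}) \geq (2\pi\hbar)^n (1-\varepsilon-\eta)^2.
\]
Since $X$ is a symmetric convex body, Blaschke--Santal\'{o} supplies the upper bound $(\operatorname{Vol}X)(\operatorname{Vol}X^{\hbar}) \leq (\pi\hbar)^n/\Gamma(n/2+1)^2$. Combining the two gives $(1-\varepsilon-\eta)^2 \leq 1/[2^n \Gamma(n/2+1)^2] = \delta(n)^2$, and taking square roots yields $|1-\varepsilon-\eta| \leq \delta(n)$, which is exactly the two-sided bound in the theorem.

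For sufficiency (the ``if'' direction) I would produce explicit pairs $(\psi, X)$ realizing each admissible value of $\varepsilon+\eta$: the natural candidate is the Gaussian coherent state $\psi(x) = (\pi\hbar)^{-n/4} e^{-|x|^2/(2\hbar)}$, which equals its own $\hbar$-Fourier transform, paired with a rescaled ball $X = B^n(r)$ whose polar dual is $B^n(\hbar/r)$; balls are the unique equality case of Blaschke--Santal\'{o}, so the extremes $\varepsilon+\eta = 1 \pm \delta(n)$ should be attained within this family. I expect this to be the main obstacle: the endpoints require careful use of incomplete Gaussian integrals in dimension $n$ (together with the extremality characterization in Blaschke--Santal\'{o}), while filling out the full interval $[1-\delta(n), 1+\delta(n)]$ requires a continuity argument as $r$ varies.
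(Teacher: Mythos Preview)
Your necessity argument is exactly the paper's proof: apply Donoho--Stark with $P=X^{\hbar}$, bound the resulting Mahler volume above by Blaschke--Santal\'{o}, and extract $|1-\varepsilon-\eta|\le\delta(n)$. The paper's proof stops there and does not address the ``if'' direction at all, so your sketch of the sufficiency argument via Gaussians on rescaled balls already goes beyond what the paper provides.
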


\begin{proof}
Combining the Donoho--Stark and the Blaschke--Santal\'{o} inequalities yields%
\begin{equation}
(2\pi\hbar)^{n}(1-\varepsilon-\eta)^{2}\leq(\operatorname*{Vol}%
X)(\operatorname*{Vol}X^{\hbar})\leq\frac{(\pi\hbar)^{n}}{\Gamma(\frac{n}%
{2}+1)^{2}}\label{DSBS}%
\end{equation}
which implies that $\varepsilon+\eta\overset{n\rightarrow\infty}{\backsim}1$.
More precisely, setting
\[
\delta(n)=\frac{1}{2^{n/2}\Gamma(n/2+1)}%
\]
\ we have%
\begin{equation}
1-\delta(n)\leq\varepsilon+\eta\leq1+\delta(n).\label{epsilon}%
\end{equation}

\end{proof}

This result has a simple quantum-mechanical interpretation. Take
$||\psi||_{L^{2}}=1$ and define, as is usual in quantum mechanics, the pure
state presence probabilities%
\[
\Pr(x\in X)=\int_{X}|\psi(x)|^{2}dx\text{ \ },\text{ \ }\Pr(p\in X^{\hbar
})=\int_{X^{\hbar}}|\widehat{\psi}(p)|^{2}dp.
\]
Assume that $\Pr(x\in X)\geq1-\varepsilon^{2}$ and $\Pr(p\in X^{\hbar}%
)\geq1-\eta^{2}$. the result above implies that for large $n$ we have%
\[
\varepsilon+\eta\approx\frac{1}{2}.
\]

One can loosely say that the more the quantum state represented by $\psi$ is
localized in $X$ in position representation, the less it is localized in the
polar dual \ $X^{\hbar}$ in momentum representation.

\begin{acknowledgement}
This work has been financed by the Austrian Research Foundation FWF (Grant
number PAT 2056623). It was done during a stay of the author at the Acoustics
Research Institute group at the Austrian Academy of Sciences.
\end{acknowledgement}

\textbf{DATA\ AVAILABILITY\ STATEMENT}: no data has been used created, other
that the source file

\textbf{CONFLICT\ OF\ INTERESTS}; there are no conflict of interests

MauriceAlexis.deGossondeVarennes@oeaw.ac.at

maurice.de.gosson@univie.ac.at

\end{document}